\newcommand\numberthis{\addtocounter{equation}{1}\tag{\theequation}}
\newtheorem{definition}{Definition}{}
\newtheorem{proposition}{Proposition}{}
\newtheorem{condition}{Condition}{}
\newtheorem{remark}{Remark}{}
\newenvironment{proof}{{\noindent\it\bf Proof}\quad}{\hfill $\blacksquare$\par}
\title{\LARGE \bf
Numerically Stable Dynamic Bicycle Model for Discrete-time Control
}
\author{Qiang Ge$^{1}$, Shengbo Eben Li$^{1}$, Qi Sun$^{1}$ and Sifa Zheng$^{1*}$ 
\thanks{*This work is supported by National Key R\&D Program of China with 2018YFB1600600. All correspondence should be sent to S. Zheng.}
\thanks{$^{1}$Qiang Ge, Shengbo Eben Li, Qi Sun and Sifa Zheng are with the School of Vehicle and Mobility, Tsinghua University, Beijing, China. (gq17@mails.tsinghua.edu.cn, \{lishbo, zsf\}@tsinghua.edu.cn, qisun@mail.tsinghua.edu.cn)}
}
\begin{document}

\maketitle
\thispagestyle{empty}
\pagestyle{empty}

\begin{abstract}

Dynamic/kinematic model is of great significance in decision and control of intelligent vehicles. However, due to the singularity of dynamic models at low speed, kinematic models have been the only choice under many driving scenarios. This paper presents a discrete dynamic bicycle model feasible at any low speed utilizing the concept of backward Euler method. We further give a sufficient condition, based on which the numerical stability is proved. Simulation verifies that (1) the proposed model is numerically stable while the forward-Euler discretized dynamic model diverges; (2) the model reduces forecast error by up to 49\% compared to the kinematic model. As far as we know, it is the first time that a dynamic bicycle model is qualified for urban driving scenarios involving stop-and-go tasks.


\end{abstract}

\section{Introduction}

Model, especially one meets these requirements: simplicity, differentiability, nonsingularity, numerical stability, is widely used in sensing, decision, and control of automated vehicles. For instance, such model describes how the state evolves without noise in an unscented  Kalman filter \cite{antonov2011unscented}, simulates kinematically-feasible trajectories of ego vehicle\cite{polack2017kinematic} or surrounding vehicles \cite{cui2019deep}, and forecasts future states in a Model Predictive Controller (MPC) \cite{falcone2007predictive}. Moreover, model-based reinforcement learning, such as Approximate Dynamic Programming (ADP), also relies on a state transition model \cite{lin2020continuous}.

The modeling of a typical front-steered four-wheel ground vehicle has been developed at various scales. Generally, they can be divided into two categories, i.e., kinematic and dynamic model \cite{rajamani2011vehicle}. Both contain a vast set of models, varying from 1-Degree-of-Freedom (DoF) integrator \cite{zheng2015stability} to 15-DoF full car \cite{perrelli2020analysis}. This paper focuses on the longitudinal and lateral control of vehicle, which involves 3 DoFs in the $x-y$ planar coordinates. Thus, the simplest model is the 3-DoF bicycle, also known as single-track model.

The two types of bicycle model are applied in different scenarios or tasks. Kinematic model works at low speed, especially stop-and-go scenarios, despite its simplification on tire sideslip characteristics. It is believed to be irreplaceable because all dynamic model encounter a singularity point when approaching zero speed. The tire slip angle estimation term has the vehicle velocity in the denominator \cite{kong2015kinematic}. Actually, even at 8 m/s the dynamic model can be unstable (see Fig. \ref{back_vs_for} (c), forward Euler method). Therefore, though more accurate at certain speeds, dynamic bicycle model can only be adopted when speed is relatively high or the discretization step length is short enough.

An insight into practical control/planning problems requires us to transform the continuous differential kinematic/dynamic bicycle model into a discrete version, i.e. difference equations. Initial Value Problem (IVP) of Ordinary Differential Equations (ODE) is dealt with many mature mathematical techniques, like the Euler method, Runge-Kutta methods, and linear multistep methods \cite{butcher2008numerical}. With forward Euler method $X_{k+1}$ can be expressed as an explicit linear combination of the derivative $f(X_k)$, the current state $X(k)$ and the step length $T_s$, which accommodates any form of $f(X_k)$ well. Though less accurate than other explicit methods, it is popular for single-step recurrence. Backward Euler method, one of the implicit methods, usually has better numerical stability but poor computational efficiency, because the next step state has to be approximated by iteration. More importantly, in control-oriented applications like MPC or ADP, the model must have an explicit form of forward propagation, limiting the backward Euler method to offline simulation.

As a result, a kinematic model discretized by forward Euler method is almost the default approach for autonomous parking \cite{xu2017model}, autonomous intersection management \cite{li2018near} and multi-vehicle formation \cite{cai2019multi}. However, the modeling error will inevitably grow with driving speed, as tire sideslip angle is no longer negligible.

The rest of this paper is organized as follows. In Section \ref{sec_continuous}, both kinematic and dynamic single-track models are illustrated. Section \ref{sec_discrete} derives an explicit discrete dynamic bicycle model with due simplifications, which is inspired by the idea of backward Euler method. Further, we manage to give a sufficient condition of provable numerical stability in section \ref{sec_stability}. The accuracy, stability, computational efficiency as well as support to stop-and-go tasks, are verified in Section \ref{sec_simulation}. Section \ref{sec_conclusion} gives some concluding remarks.

\section{Bicycle\ Model}\label{sec_continuous}

\subsection{Dynamic\ Model}

The continuous model is given in (\ref{dyna_continuous}). Here the longitudinal acceleration $a = \frac{T_w}{mR_w}$, where $T_w$ is the longitudinal drive torque from ground, $m$ is mass of the vehcle, $R_w$ is the rolling radius of the drive wheel. 

\begin{equation}
\dot{X}=f(X, U)=
\left[\begin{array}{c}
u \cos \varphi-v \sin \varphi \\
v \cos \varphi+u \sin \varphi \\
\omega \\
a+v \omega-\frac{1}{m} F_{Y 1} \sin \delta \\
-u \omega  + \frac{1}{m} \left (F_{Y 1} \cos \delta+F_{Y 2}\right)\\
\frac{1}{I_{z}}\left(l_f F_{Y 1} \cos \delta-l_r F_{Y 2}\right)
\end{array}\right]
\label{dyna_continuous}
\end{equation}

where 

\begin{equation}
    X=\left[\begin{array}{l}
    x\\
    y\\
    \varphi\\
    u\\
    v\\
    \omega
    \end{array}\right], \ U=\left[\begin{array}{l}
    a\\
    \delta
    \end{array}\right]
\end{equation}


\begin{figure}[htbp]
    \centering
    \captionsetup[subfigure]{justification=centering}
        \subfloat[\textit{a}]{\label{dyna}\includegraphics[width=0.23\textwidth]{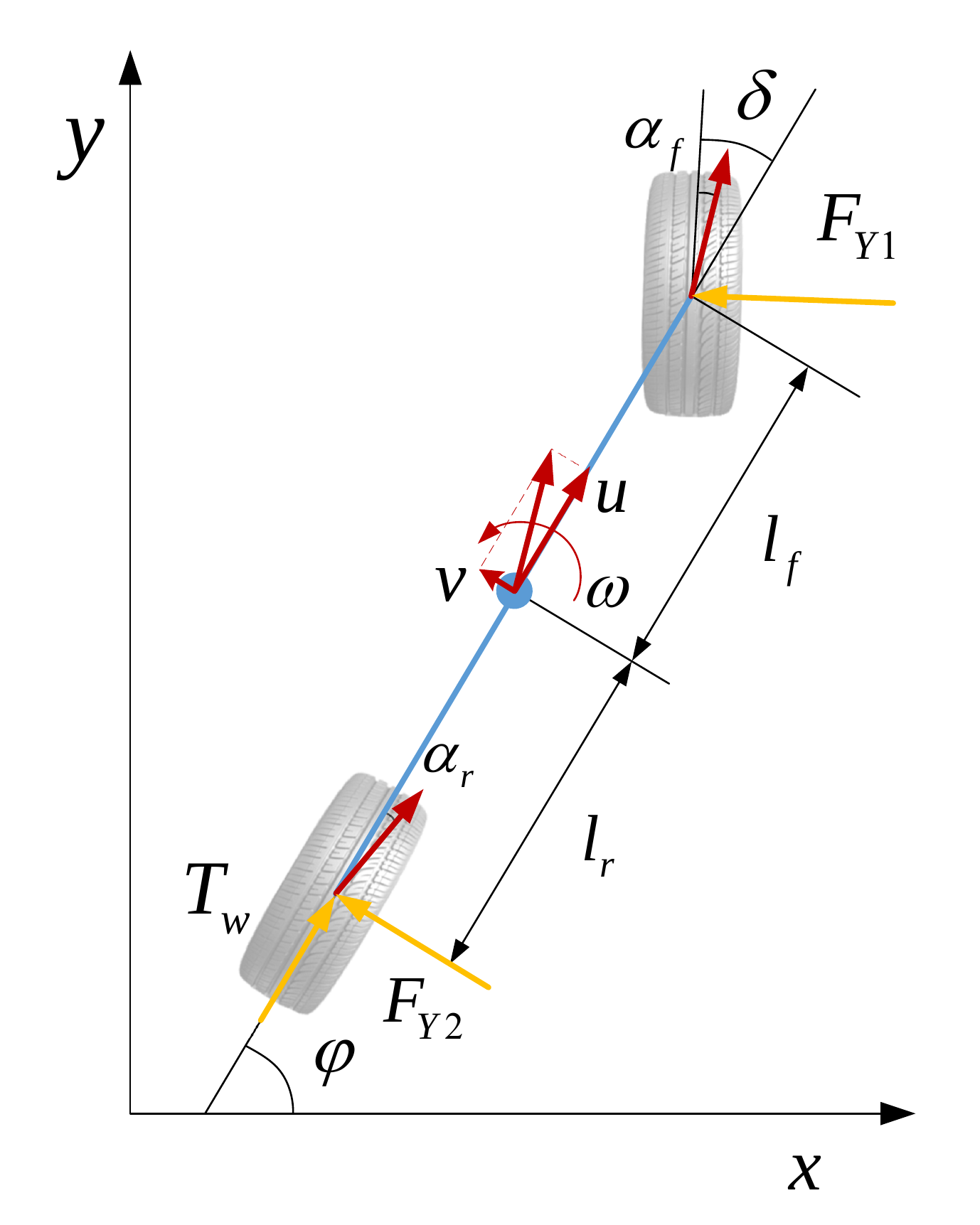}} \quad
        \subfloat[\textit{b}]{\label{kina}\includegraphics[width=0.23\textwidth]{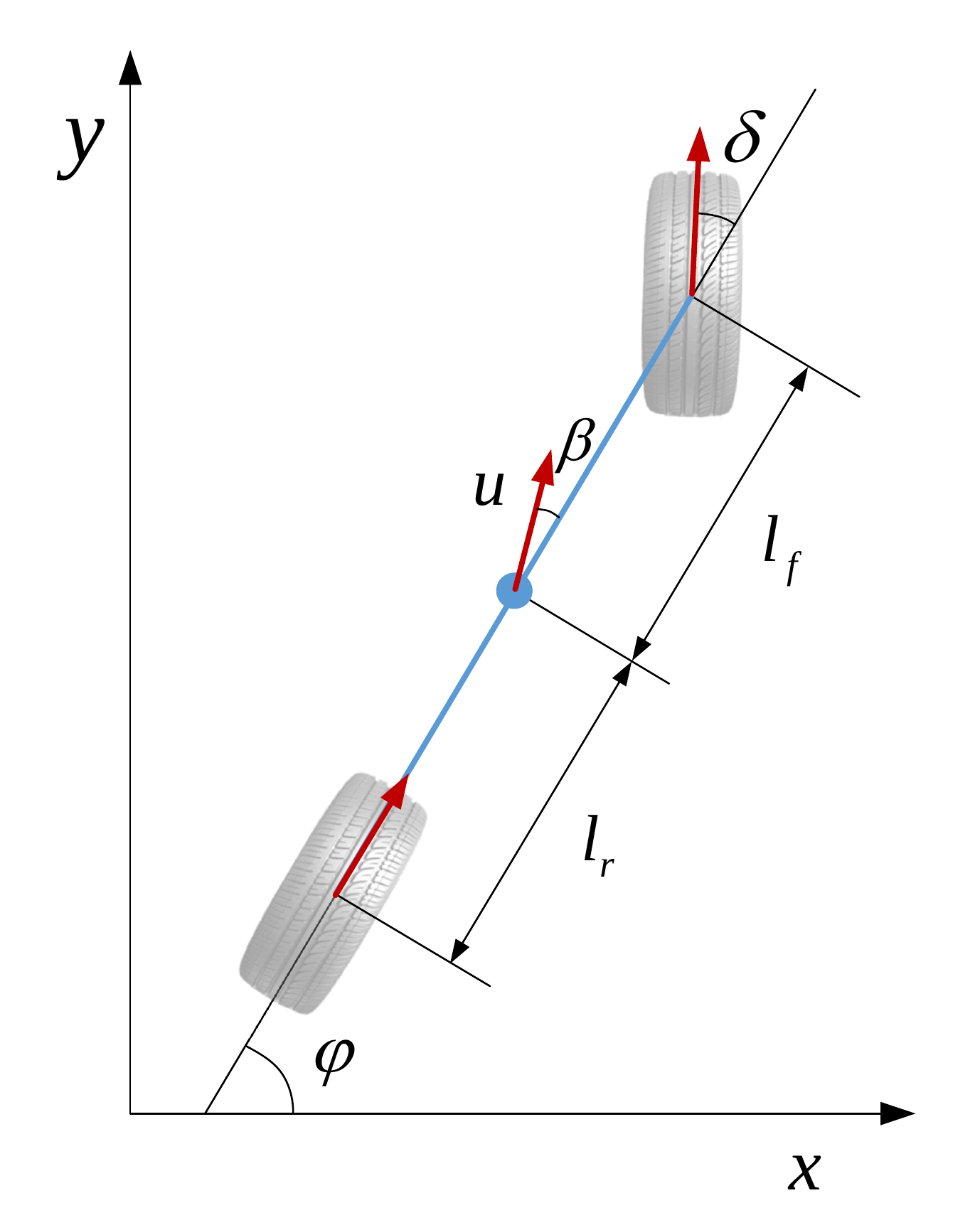}}
    \caption{{ Bicycle model}\\
    (a) dynamic,\ (b) kinematic}
\label{bicycle models}
\end{figure}

The lateral forces $F_{Y1}$, $F_{Y2}$ are functions of sideslip angle $\alpha_{f}$ and $\alpha_{r}$. Under mild steering angle, we can assume that
\begin{subequations}
\begin{align}
    \cos\delta &\approx 1 \\
    \sin\delta &\approx 0\\
    F_{Y 1}&=k_{f} \alpha_{1} \approx k_{f}\left(\frac{v+l_{f} \omega}{u}-\delta\right)\label{c} \\
    F_{Y 2}&=k_{r} \alpha_{2} \approx k_{r} \frac{v-l_{r} \omega}{u}\label{d}
\end{align}
\end{subequations}

Note that we use a linear tire sideslip force model here, which is important for the deduction of an explicit discretized form later.

\subsection{Kinematic\ Model}

The widely used kinematic bicycle model \cite{rajamani2011vehicle} assumes that both front and rear wheels have only longitudinal rolling movement, but no lateral slip. This is inaccurate because tire characteristics are neglected.

\begin{equation}
    \dot{X}_{\text {kine }}=f_{\text{kine}}\left(X_{\text {kine }}, U\right)=\left[\begin{array}{c}
    u \cos (\varphi+\beta(\delta)) \\
    u \sin (\varphi+\beta(\delta)) \\
    a \\
    \frac{u}{l_{r}} \sin (\beta(\delta))
    \end{array}\right]
    \label{kine_continuous}
\end{equation}

where
\begin{equation}
    \beta(\delta)=\arctan \left(\tan \delta \frac{l_{r}}{l_{f}+l_{r}}\right)
\end{equation}

\begin{equation}
    X_{\text {kine}}=\left[\begin{array}{l}
    x\\
    y\\
    u\\
    \varphi\\
    \end{array}\right]
\end{equation}

Note that $u$ is longitudinal velocity in dynamic model, and absolute velocity in kinematic model.

\section{Discretization}\label{sec_discrete}

To apply the ordinary differential equations (\ref{dyna_continuous}), (\ref{kine_continuous}) to controller modeling or trajectory planning, the foremost step is to solve the initial value problems (IVP) with numerical methods. 


As a typical implicit method, the main defect of backward Euler method is that $X_{k+1}$ has to be found by solving the rootfinding problem (\ref{back_euler}), usually using fixed-point iteration.

\begin{equation}
    X_{k+1}-T_{s} f\left(X_{k+1}, U_{k}\right)=X_{k}
    \label{back_euler}
\end{equation}

In order to retain the advantage of explicit methods, we derive $X_{k+1}$ in a variable-by-variable fashion. $x_{k+1}$, $y_{k+1}$ and $\varphi_{k+1}$ are calculated by forward Euler method. $v_{k+1}$ and $\omega_{k+1}$ are solved from
\begin{small}
$$
\!v_{k+1} -T_s\!\left(-u_k \omega_k\!+\!\frac{1}{m} \left(F_{Y 1}(u_k,v_{k+1},\omega_k,\delta_{k})\!+\!F_{Y 2}(u_k,v_{k+1},\omega_k)\right)\right)\!=\!v_k
$$
$$
\omega_{k+1} -T_s \left( \frac{1}{I_{z}}\left(l_f F_{Y 1}(u_k,v_k,\omega_{k+1},\delta_{k}) -l_r F_{Y 2}(u_k,v_k,\omega_{k+1})\right)\right)=\omega_{k}
$$
\end{small}

Owing to the linear sideslip force assumption, the above formulas have analytical solutions. $u_{k+1}$ is special because theoretically, it belongs to a quadratic equation. Therefore, one item, i.e. $v \omega-\frac{1}{m} F_{Y 1} \sin \delta$ is simplified into zero. After that, $u_{k+1}$ is also derived by forward Euler method. Although such simplification introduces minor modeling error, in a predictive controller, the receding-horizon self-correction will help compensate for possible negative effects.
 

Technically our discretization is not backward Euler method, but a variant inspired by it. It is the concept of backward Euler method that makes the model stable, which is proved in the next section. The nonlinear discrete model is


\begin{equation}
    X_{k+1}\triangleq F(X_{k},U_{k})
    \label{nonlinear_discrete}
\end{equation}

which is extended as:

\begin{equation}
    \left[\begin{array}{c}
    x_{k+1} \\
    y_{k+1} \\
    \varphi_{k+1} \\
    u_{k+1} \\
    v_{k+1} \\
    \omega_{k+1}
    \end{array}\right]=\left[\begin{array}{c}
    x_{k}+T_{s}\left(u_{k} \cos \varphi_{k}-v_{k} \sin \varphi_{k}\right) \\
    y_{k}+T_{s}\left(v_{k} \cos \varphi_{k}+u_{k} \sin \varphi_{k}\right) \\
    \varphi_{k}+T_{s} \omega_{k} \\
    u_{k}+T_{s} a_{k} \\
    \frac{m u_{k} v_{k}+T_{s}\left(l_{f}k_{f}-l_{r} k_{r}\right) \omega_{k}-T_{s} k_{f} \delta_{k} u_{k}-T_{s} m u_{k}^{2} \omega_{k}}{m u_{k}-T_{s}\left(k_{f}+k_{r}\right)} \\
    \frac{I_{z} u_{k} \omega_{k}+T_{s}\left(l_{f} k_{f}-l_{r} k_{r}\right) v_{k}-T_{s} l_{f} k_{f} \delta_{k} u_{k}}{I_{z} u_{k}-T_{s}\left(l_{f}^{2} k_{f}+l_{r}^{2} k_{r}\right)}
    \end{array}\right]
    \label{back_euler_bicycle}
\end{equation}

\section{Numerical\ Stability}\label{sec_stability}

\subsection{linearization\ of\ difference\ equation}

The nonlinear recursive difference equation (\ref{nonlinear_discrete}) can be linearized around a state point. Firstly, it can be seen as a reference point of the following nonlinear function
\begin{equation}
    Y = F(X,U)
\end{equation}

it can be derived that

\begin{equation}
    Y-X_{k+1}=\left.\frac{\partial F}{\partial X}\right|_{X=X_{k} \atop U=U_{k}}\left(X-X_{k}\right)+\left.\frac{\partial F}{\partial U}\right|_{X=X_{k} \atop U=U_{k}}\left(U-U_{k}\right)
    \label{linearization}
\end{equation}

denote $X-X_k$ as $\vec{\varepsilon_{k}}$, and $Y-X_{k+1}$ as $\vec{\varepsilon}_{k+1}$,

\begin{equation}
    \vec{\varepsilon}_{k+1}=J_{F, X}\left(X_{k}, U_{k}\right) \vec{\varepsilon}_{k} \triangleq A_{k} \vec{\varepsilon}_{k}
    \label{recursive_error}
\end{equation}

Note that from (\ref{linearization}) to (\ref{recursive_error}), the difference item of input $U$ was ignored. This is because the input $U$ is defined rather than estimated by us. In another word, we do not consider any error originating from $U-U_k$.

Additionally, note that the first three state variables, i.e. $x_k$, $y_k$ and $\varphi_k$ are merely combination and integral of the other variables $u_k$, $v_k$ and $\omega_k$. Thus, we can alternatively consider a dynamic system (still termed $F(X_k,U_k)$ for simplicity) with $X_k$ containing only $u_k$,$v_k$ and $\omega_k$, which distinguishes (\ref{back_euler_bicycle}) from other systems.

According to the definition of Jacobian matrix, under the reduced system,

\begin{align*}
    A_k&= 
    \left[\begin{array}{ccc}
        \frac{\partial F}{\partial u_k} & \frac{\partial F}{\partial v_k} & \frac{\partial F}{\partial \omega_k} \end{array}\right] \\
    &=\left[\begin{array}{ccc}
    1 & 0 & 0 \\
    b_{1, k} & \frac{m u_{k}}{m u_{k}-T_{s}\left(k_{f}+k_{r}\right)} & \frac{T_{s}\left(l_f k_{f}-l_r k_{r}\right)-T_{s} m u_{k}^{2}}{m u_{k}-T_{s}\left(k_{f}+k_{r}\right)} \\
    b_{2, k} & \frac{T_{s}\left(l_f k_{f}-l_r k_{r}\right)}{I_{z} u_{k}-T_{s}\left(l_f^{2} k_{f}+l_r^{2} k_{r}\right)} & \frac{I_{z} u_{k}}{I_{z} u_{k}-T_{s}\left(l_f^{2} k_{f}+l_r^{2} k_{r}\right)}
    \end{array}\right]  \\
    &\triangleq\left[\begin{array}{cc}
        1 & \vec{0} \\
        \vec{b}_{k} & \hat{A}_k
    \end{array}\right] \numberthis
    \label{A_k}
\end{align*}

\newcounter{TempEqCnt}                         
\setcounter{TempEqCnt}{\value{equation}} 
\setcounter{equation}{13}                           
\begin{figure*}[hb]
    \begin{equation}
        b_{1, k}=\frac{\left(m v_{k}-T_{s} k_{f} \delta_{k}-2 T_{s} m u_{k} \omega_{k}\right)\left[m u_{k}-T_{s}\left(k_{f}+k_{r}\right)\right]-m\left[m u_{k} v_{k}+T_{s}\left(l_f k_{f}-l_r k_{r}\right) \omega_{k}-T_{s} k_{f} \delta_{k} u_{k}-T_{s} m u_{k}^{2} \omega_{k}\right]}{\left[m u_{k}-T_{s}\left(k_{f}+k_{r}\right)\right]^{2}}
    \end{equation}
\end{figure*}

\setcounter{TempEqCnt}{\value{equation}} 
\setcounter{equation}{14}                           
\begin{figure*}[hb]
    \begin{equation}
        b_{2, k}=\frac{\left(I_{z} \omega_{k}-T_{s} l_f k_{f} \delta_{k}\right)\left[I_{z} u_{k}-T_{s}\left(l_f^{2} k_{f}+l_r^{2} k_{r}\right)\right]-I_{z}\left[I_{z} u_{k} \omega_{k}+T_{s}\left(l_f k_{f}-l_r k_{r}\right) v_{k}-T_{s} l_f k_{f} \delta_{k} u_{k}\right]}{\left[I_{z} u_{k}-T_{s}\left(l_f^{2} k_{f}+l_r^{2} k_{r}\right)\right]^{2}}
    \end{equation}
\end{figure*}

\subsection{analysis\ of\ numerical\ stability}

Our definition of numerical stability is given hereby,

\begin{definition}
    The system described by (\ref{back_euler_bicycle}) is called numerically stable if such boundedness characteristic is satisfied:
    $$ \forall \vec{\varepsilon}_{0} \in \mathcal{R}^{3 \times 1}, \exists C \in \mathcal{R}, s.t. \left\|\lim _{k \rightarrow \infty} A_{k} \cdots A_{1} A_{0} \vec{\varepsilon}_{0}\right\|\leq C $$
    \label{defi1}
\end{definition}

As a matter of fact, if (\ref{recursive_error}) can be seen as a switched discrete linear system, the concept of \textit{Joint Spectral Radius} could be of help in the stability proof \cite{jungers2009joint}. However, since the matrix $A_k$ in this case cannot be included by a finite set, we have to seek other methods.

Definition \ref{defi1} requires that, each entry of the long product $A_k\cdots A_1 A_0$ does not grow infinity with increasing $k$. This requirement, however, is difficult to be satisfied as the coefficient matrix $A_k$ is varying with each step $k$. More specifically, we believe that the rigorous stability only holds under certain conditions. Therefore, in what follows we manage to give a sufficient condition that will lead to this good feature.

As we have

\begin{equation}
    \begin{aligned}
    A_{k} \cdots A_{1} A_{0} &=\left[\begin{array}{cc}
    1 & \vec{0} \\
    \vec{b}_{k} & \hat{A}_{k}
    \end{array}\right] \cdots\left[\begin{array}{cc}
    1 & \vec{0} \\
    \vec{b}_{0} & \hat{A}_{0}
    \end{array}\right] \\
    &\triangleq\left[\begin{array}{cc}
    1 & \vec{0} \\
    \vec{b}_{k, 0} & \hat{A}_{k, 0}
    \end{array}\right]
    \end{aligned}
    \label{product}
\end{equation}

Thus, the numerical stability will be guaranteed as long as $\hat{A}_{k,0}$ and $\vec{b}_{k,0}$ are both bounded.

\subsubsection{boundedness\ of\ $\hat{A}_{k,0}$}

The sufficient condition is given as follows.

\begin{condition}\label{con1}
    $\left\|\hat{A}_{k}\right\| \leq 1, k=1,2, \cdots$
\end{condition}

If Condition \ref{con1} is satisfied, the long product $\hat{A}_{k,0}$ will decay exponentially. Moreover, we notice that $\hat A_k$ involves $u_k$ and step length $T_s$, which means that the norm of $\hat A_k$ will not be an arbitrary value but restricted in a certain range. This physical meaning could be crucial for the establishment of Condition \ref{con1}.

\begin{proposition}
    $\hat{A}_{k,0}$ is always bounded when $k$ approaches infinity, given that Condition \ref{con1} is satisfied.
\end{proposition}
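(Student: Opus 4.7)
The plan is to exploit the block-lower-triangular structure of each $A_k$ and reduce the long product $\hat{A}_{k,0}$ to an ordinary matrix product of the lower-right blocks, after which Condition \ref{con1} together with submultiplicativity of the operator norm will finish the argument.

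First I would verify, by direct block multiplication, the identity
\begin{equation*}
\begin{bmatrix} 1 & \vec{0} \\ \vec{b}_2 & \hat{A}_2 \end{bmatrix} \begin{bmatrix} 1 & \vec{0} \\ \vec{b}_1 & \hat{A}_1 \end{bmatrix} = \begin{bmatrix} 1 & \vec{0} \\ \vec{b}_2 + \hat{A}_2 \vec{b}_1 & \hat{A}_2 \hat{A}_1 \end{bmatrix},
\end{equation*}
which shows that the lower-right block of a product of two matrices of the form appearing in (\ref{A_k}) is simply the product of their lower-right blocks. A straightforward induction on $k$ then yields the clean formula
\begin{equation*}
\hat{A}_{k,0} = \hat{A}_k\,\hat{A}_{k-1}\cdots \hat{A}_1\,\hat{A}_0,
\end{equation*}
which is exactly the decomposition implicitly assumed in the block decomposition of (\ref{product}).

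With this identity in hand, submultiplicativity of any induced matrix norm gives $\|\hat{A}_{k,0}\| \le \prod_{i=0}^{k} \|\hat{A}_i\|$. Invoking Condition \ref{con1}, each factor is at most $1$, so $\|\hat{A}_{k,0}\| \le 1$ for every $k$. In particular $\limsup_{k\to\infty}\|\hat{A}_{k,0}\| \le 1$, proving the claimed boundedness (and in fact recovering the ``decay'' remark of the authors whenever the inequality in Condition \ref{con1} is strict).

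The only subtlety I anticipate is being careful about which norm is meant: the argument needs a submultiplicative norm (e.g.\ the spectral or any operator norm induced by a vector norm), and one should note in passing that because $\hat{A}_k$ is $2\times 2$, all matrix norms are equivalent, so the conclusion is insensitive to the particular norm chosen in Condition \ref{con1}. No other step requires real work; the proposition is essentially a bookkeeping consequence of the block structure combined with the assumed norm bound.
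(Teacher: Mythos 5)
Your proposal is correct and follows essentially the same route as the paper: identify $\hat{A}_{k,0}$ with the product $\hat{A}_k\cdots\hat{A}_1\hat{A}_0$ via the block-lower-triangular structure, then apply submultiplicativity of the induced norm together with Condition \ref{con1} to get $\|\hat{A}_{k,0}\|\le 1$. Your explicit block-multiplication/induction step and the remark on norm choice merely make precise what the paper leaves implicit in (\ref{product}).
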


\begin{proof}
\begin{equation}
    \begin{aligned}
    \left\|\hat{A}_{k, 0}\right\| &=\left\|\hat{A}_{k} \cdots \hat{A}_{1} \hat{A}_{0}\right\| \\
    & \leq\left\|\hat{A}_{k}\right\|\left\|\hat{A}_{1}\right\| \cdots\left\|\hat{A}_{0}\right\| \\
    & \leq 1
    \end{aligned}
\end{equation}
\end{proof}

\begin{remark}
    Given that Condition \ref{con1} is established, with the \textit{submultiplicative} property of induced norm, we prove that $\left\|\hat{A}_{k, 0}\right\|$ is always bounded by 1 as $k$ increases. Moreover, as long as $\left\|\hat{A}_{k}\right\| \not\equiv 1$, each entry of the submatrix $\hat{A}_{k,0}$ in (\ref{product}) will converge to zero exponentially, which is even more favorable for stability.
\end{remark}

\subsubsection{boundedness\ of\ $\vec{b}_{k,0}$}

\begin{proposition}
    $\left\|\vec{b}_{k, 0}\right\|$ is always bounded when $k$ approaches infinity, given that Condition \ref{con1} is satisfied.
\end{proposition}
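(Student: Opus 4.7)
The plan is to exploit the block-triangular structure of (\ref{product}) to obtain a one-step recursion for $\vec{b}_{k,0}$, and then bound the resulting sum using Condition \ref{con1} combined with physical boundedness of the state.

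First, I would block-multiply $A_{k+1}$ into the right-hand side of (\ref{product}) and match the lower-left block, obtaining the one-step recursion $\vec{b}_{k+1,0} = \hat{A}_{k+1}\vec{b}_{k,0}+\vec{b}_{k+1}$ with initialization $\vec{b}_{0,0}=\vec{b}_0$. Unrolling gives the explicit representation
\begin{equation*}
\vec{b}_{k,0} \;=\; \sum_{j=0}^{k}\bigl(\hat{A}_{k}\hat{A}_{k-1}\cdots\hat{A}_{j+1}\bigr)\,\vec{b}_j,
\end{equation*}
where the empty product corresponding to $j=k$ is taken as the identity.

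Second, I would take norms, apply the triangle inequality, and invoke the submultiplicative property of the induced operator norm together with Condition \ref{con1}:
\begin{equation*}
\|\vec{b}_{k,0}\|\;\le\;\sum_{j=0}^{k}\|\hat{A}_k\cdots\hat{A}_{j+1}\|\,\|\vec{b}_j\|\;\le\;\sum_{j=0}^{k}\|\vec{b}_j\|.
\end{equation*}
It then remains to produce a uniform bound on $\|\vec{b}_j\|$. Inspecting the explicit expressions for $b_{1,k}$ and $b_{2,k}$, the denominators $[m u_k - T_s(k_f+k_r)]^2$ and $[I_z u_k - T_s(l_f^2 k_f + l_r^2 k_r)]^2$ are precisely the quantities appearing in the denominators of $\hat{A}_k$ in (\ref{A_k}), which under Condition \ref{con1} are positive and bounded away from zero; the numerators are polynomials in the physically bounded quantities $u_j,v_j,\omega_j,\delta_j$. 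Hence $\|\vec{b}_j\|\le B$ for some uniform constant $B$.

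The main obstacle is the gap between the naive estimate $\sum_{j=0}^{k}\|\vec{b}_j\|\le (k+1)B$, which diverges, and a genuine uniform bound. To close this gap I would sharpen the submultiplicative step by appealing to the strict contraction $\|\hat{A}_i\|\le\rho<1$ mentioned in the preceding Remark. This strict inequality in fact follows from Condition \ref{con1} under mild non-degeneracy, since each diagonal entry of $\hat{A}_i$ in (\ref{A_k}) is a ratio of the form $mu_k/\bigl(mu_k-T_s(k_f+k_r)\bigr)$ whose denominator exceeds its numerator whenever the cornering-stiffness contribution is present. The geometric sum then collapses the estimate to $\|\vec{b}_{k,0}\|\le B\sum_{j=0}^{k}\rho^{k-j}\le B/(1-\rho)$, yielding the required uniform boundedness. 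The boundary case $\|\hat{A}_i\|\equiv 1$ is non-generic and would need to be handled separately, but as noted in the Remark it corresponds to a degenerate regime that the physics of the bicycle model rules out.
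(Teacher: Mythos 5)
Your proposal follows essentially the same route as the paper's proof: the same block-triangular recursion $\vec{b}_{k,0}=\vec{b}_{k}+\hat{A}_{k}\vec{b}_{k-1,0}$, the same unrolled sum $\vec{b}_{k}+\hat{A}_{k}\vec{b}_{k-1}+\cdots+\hat{A}_{k}\cdots\hat{A}_{1}\vec{b}_{0}$, the triangle inequality with submultiplicativity of the induced norm, a uniform bound on $\|\vec{b}_j\|$ from the physical boundedness of $u_k,v_k,\omega_k,\delta_k$, and a geometric-series estimate, exactly as in (\ref{b_bound})--(\ref{A_b_bound}). The one point where you go beyond the paper is in making explicit that Condition \ref{con1} as stated ($\|\hat{A}_k\|\le 1$) only yields the divergent estimate $(k+1)B$, and that a strict contraction $\|\hat{A}_k\|\le\rho<1$ is needed to collapse the sum; this is precisely the assumption the paper leaves implicit when it writes the bound $\|\vec{b}_{*}\|\,(1-\|\hat{A}_{*}\|^{k+1})/(1-\|\hat{A}_{*}\|)$ with $\|\hat{A}_{*}\|=\max_{u_k,T_s}\|\hat{A}_k\|$, which is finite only if that maximum is strictly below one, so flagging it is a genuine improvement in rigor. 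Be aware, however, that your justification of the strict inequality is not sound as stated: having diagonal entries of $\hat{A}_k$ in (\ref{A_k}) with modulus below one does not bound the induced norm, since the off-diagonal entries $\frac{T_s(l_f k_f-l_r k_r)-T_s m u_k^2}{m u_k-T_s(k_f+k_r)}$ and $\frac{T_s(l_f k_f-l_r k_r)}{I_z u_k-T_s(l_f^2 k_f+l_r^2 k_r)}$ can in principle push $\|\hat{A}_k\|$ above one; the paper does not prove the norm condition analytically either, but simply takes the maximum over the bounded operating range and verifies $\|\hat{A}_k\|_2\le 1$ numerically (Fig. \ref{Ak_norm_vs_uk}), so on this point your argument should either be weakened to an assumption (as the paper effectively does) or be replaced by an actual computation of the norm over the admissible range of $u_k$ and $T_s$.
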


\begin{proof}
\begin{equation}
    \begin{aligned}
    \vec{b}_{k, 0} &=\vec{b}_{k}+\hat{A}_{k} \vec{b}_{k-1,0} \\
    &=\vec{b}_{k}+\hat{A}_{k} \vec{b}_{k-1}+\hat{A}_{k} \hat{A}_{k-1} \vec{b}_{k-2,0} \\
    &=\cdots \\
    &=\vec{b}_{k}+\hat{A}_{k} \vec{b}_{k-1}+\cdots+\hat{A}_{k} \hat{A}_{k-1} \cdots \hat{A}_{1} \vec{b}_{0}
    \end{aligned}
\end{equation}

\begin{equation}\label{b_bound}
    \begin{aligned}
    \left\|\vec{b}_{k, 0}\right\| &=\left\|\vec{b}_{k}+\hat{A}_{k} \vec{b}_{k-1}+\cdots+\hat{A}_{k} \hat{A}_{k-1} \cdots \hat{A}_{1} \vec{b}_{0}\right\| \\
    & \leq\left\|\vec{b}_{k}\right\|+\left\|\hat{A}_{k} \vec{b}_{k-1}\right\|+\cdots+\left\|\hat{A}_{k} \hat{A}_{k-1} \cdots \hat{A}_{1} \vec{b}_{0}\right\| \\
    & \leq\left\|\vec{b}_{k}\right\|+\left\|\hat{A}_{k}\right\|\left\|\vec{b}_{k-1}\right\|+\cdots+\left\|\hat{A}_{k}\right\| \cdots\left\|\hat{A}_{1}\right\|\left\|\vec{b}_{0}\right\| \\
    & \leq\left\|\vec{b}_{*}\right\|+\left\|\hat{A}_{*}\right\|\left\|\vec{b}_{*}\right\|+\cdots+\left\|\hat{A}_{*}\right\|^{k}\left\|\vec{b}_{*}\right\| \\
    & =\left\|\vec{b}_{*}\right\| \frac{1-\left\|\hat{A}_{*}\right\|^{k+1}}{1-\left\|\hat{A}_{*}\right\|}
    \end{aligned}
\end{equation}

where

\begin{equation}\label{A_b_bound}
\begin{array}{l}
    \left\|\vec{b}_{*}\right\| = \max \limits_{u_{k}, v_{k}, \omega_{k}, \delta_{k}, T_s}\left\|\vec{b}_{k}\right\| \\
    \left\|\hat{A}_{*}\right\| = \max \limits_{u_{k},T_s}\left\|\hat{A}_{k}\right\|
\end{array}
\end{equation}

\end{proof}

Since the state variables $u_k$, $v_k$, $\omega_k$ are all bounded by their physical meanings, as continuous functions defined over them, $\left\|\vec{b}_{k}\right\|$ and $\left\|\hat{A}_{k}\right\|$ must have maximum values accordingly (\ref{A_b_bound}). According to our proof, given that Condition \ref{con1} is established, $\left\|\vec{b}_{k,0}\right\|$ is always bounded by $\frac{\left\|\vec{b}_{*}\right\|}{1-\left\|\hat{A}_{*}\right\|}$ as $k$ increases.
\begin{remark}
The proof is based on linearization of the initial nonlinear difference model. Nevertheless, according to \textit{Lagrange's mean value theorem}, the recursive formula (\ref{recursive_error}) can be changed into
\begin{equation}
    \begin{array}{c}
    \vec{\varepsilon}_{k+1}=A_{\operatorname{mid}} \vec{\varepsilon}_{k} \\
    A_{\operatorname{mid}}=J_{F, X}\left(X_{k}+(1-\tau) X_{k+1}, U_{k}\right), \tau \in[0,1]
    \end{array}
\end{equation}
And obviously, this will not change the subsequent analysis and conclusion.
\end{remark}

\begin{remark}
    Here we give a sufficient instead of sufficient \& necessary condition, which means that the violation of Condition \ref{con1} does not necessarily bring numerical instability. The main contribution of this condition, is the easy-to-satisfy property (see Fig. \ref{Ak_norm_vs_uk}) under general settings and common speed, which imposes great practical meaning to the application of dynamic bicycle model (\ref{back_euler_bicycle}).
\end{remark}

\section{Simulation}\label{sec_simulation}

As pointed out by Francesco Borrelli \textit{et al.} \cite{kong2015kinematic}, a kinematic model has these main advantages: low-speed feasibility and computational efficiency (2015). At the same time, they also observe significant growth of error on the higher speed sinusoidal and winding track tests. Arnaud de La Fortelle \textit{et al.} \cite{polack2017kinematic} then find that $a_y \leq 0.5\mu g$ is decisive, otherwise error of the kinematic bicycle model will become very large (2017).

Therefore, it is necessary to verify the low-speed feasibility of the proposed model, and compare the differences between both models. 
In what follows, we choose a vehicle (C-Class, Hatchback 2017) in \textit{CarSim} as the prototype. Important parameters are listed in Table \ref{tab: paras}. Open-loop simulation is about step steer response, whereas the closed-loop simulation involves a nonlinear MPC controller undertaking a stop-and-go task. 

\begin{table}

    \caption{Simulation Parameters}
    \label{tab: paras}
    \begin{center}
    \begin{tabular}{p{1cm}<{\centering} p{4.3cm} p{1.7cm}}
    \hline
    \textbf{Parameter}&\textbf{Description}&\textbf{Value}\\
    \hline


    {$I_{z}$} & yaw inertia of vehicle body & 1536.7 kg$\cdot$m$^2$\\

    {$k_f$} & front axle equivalent sideslip stiffness & -128916 N/rad\\

    {$k_r$} & rear axle equivalent sideslip stiffness & -85944 N/rad\\

    {$l_f$} & distance between C.G. and front axle & 1.06 m\\

    {$l_r$} & distance between C.G. and rear axle & 1.85 m\\

    {$m$} & mass of the vehicle & 1412 kg\\

    {$N_p$} & predictive horizon& 20\\

    {$N_c$} & control horizon& 1\\

    {$T_s$} & discretization step length& 0.1 s\\




    \hline
    \end{tabular}
    \end{center}
\end{table}

\subsection{Open-loop\ Control}

\subsubsection{backward\ versus\ forward}

The numerical stability, as well as the insensitivity regarding step length and driving speed, is rather important for autonomous driving application. Before comparison, we first verify if the numerical stability Condition \ref{con1} is satisfied. Let 0 m/s $\leq\ u_k\ \leq$ 15 m/s, the 2-norm of $\hat{A}_k$ is shown in Fig. \ref{Ak_norm_vs_uk}. Since $\left\|\hat{A}_k\right\|_2\ \leq 1$, the numerical stability of our model is essentially guaranteed. 
\begin{figure}[htbp]
    \centerline{\includegraphics[width=6cm,height=4.69cm]{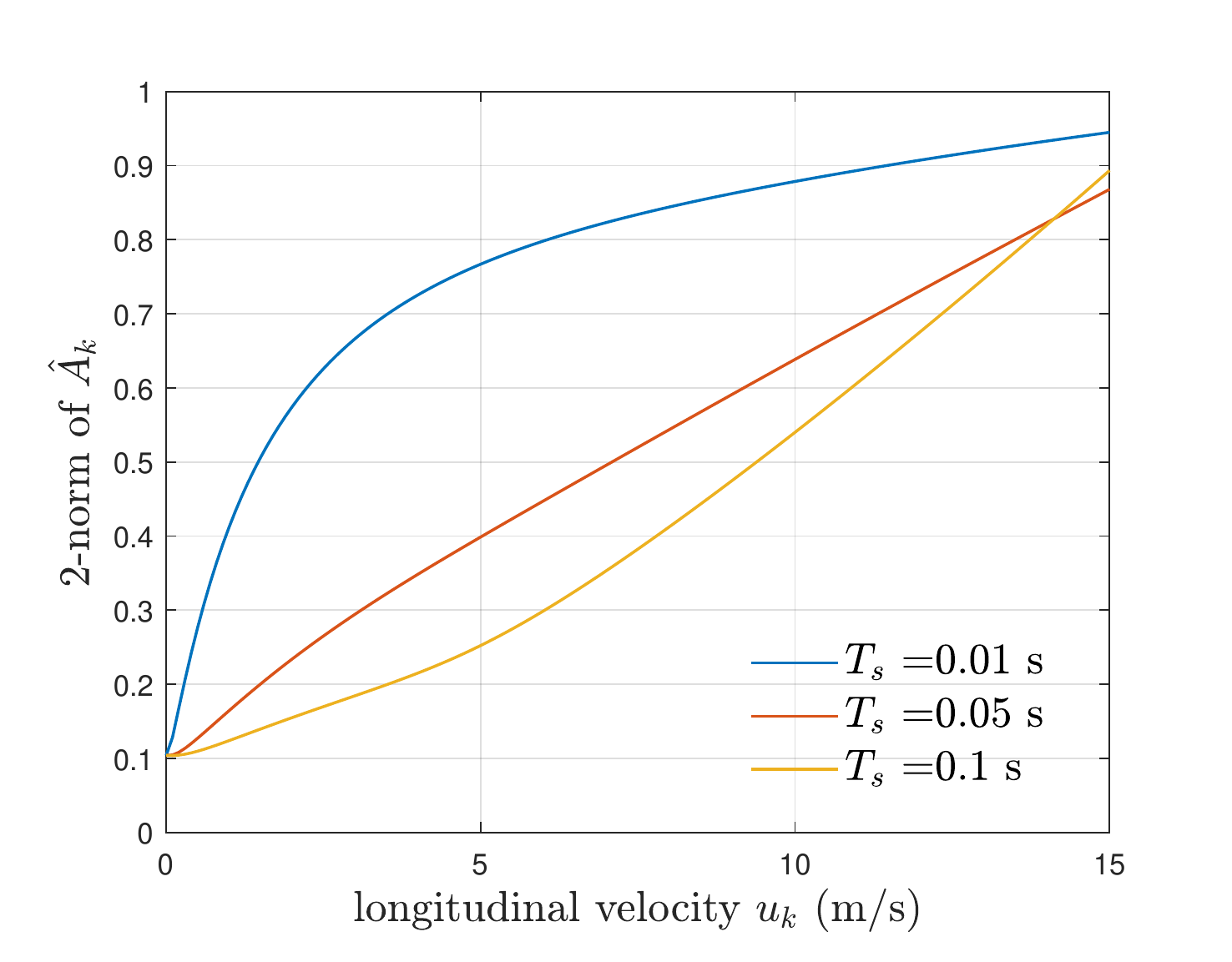}}
    \caption{Range of $\left\|\hat{A}_k\right\|_2$ defined over varying $u_k$}
    \label{Ak_norm_vs_uk}
\end{figure}


\begin{figure}[htbp]
      \centering
      \captionsetup[subfigure]{justification=centering}
          \subfloat[\textit{a}]{\label{ts0.01}\includegraphics[width=0.4\textwidth]{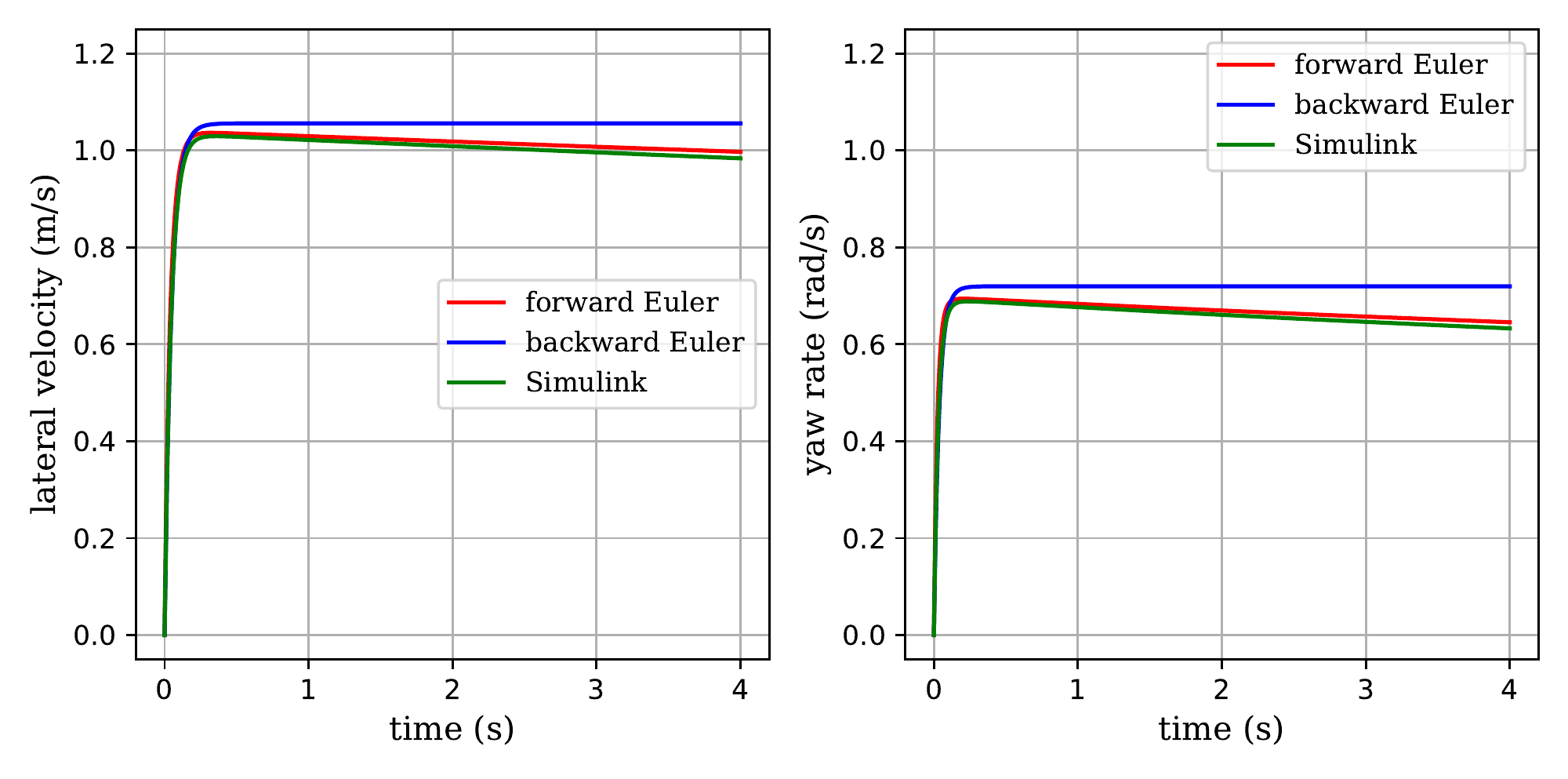}}\\
          \subfloat[\textit{b}]{\label{ts0.05}\includegraphics[width=0.4\textwidth]{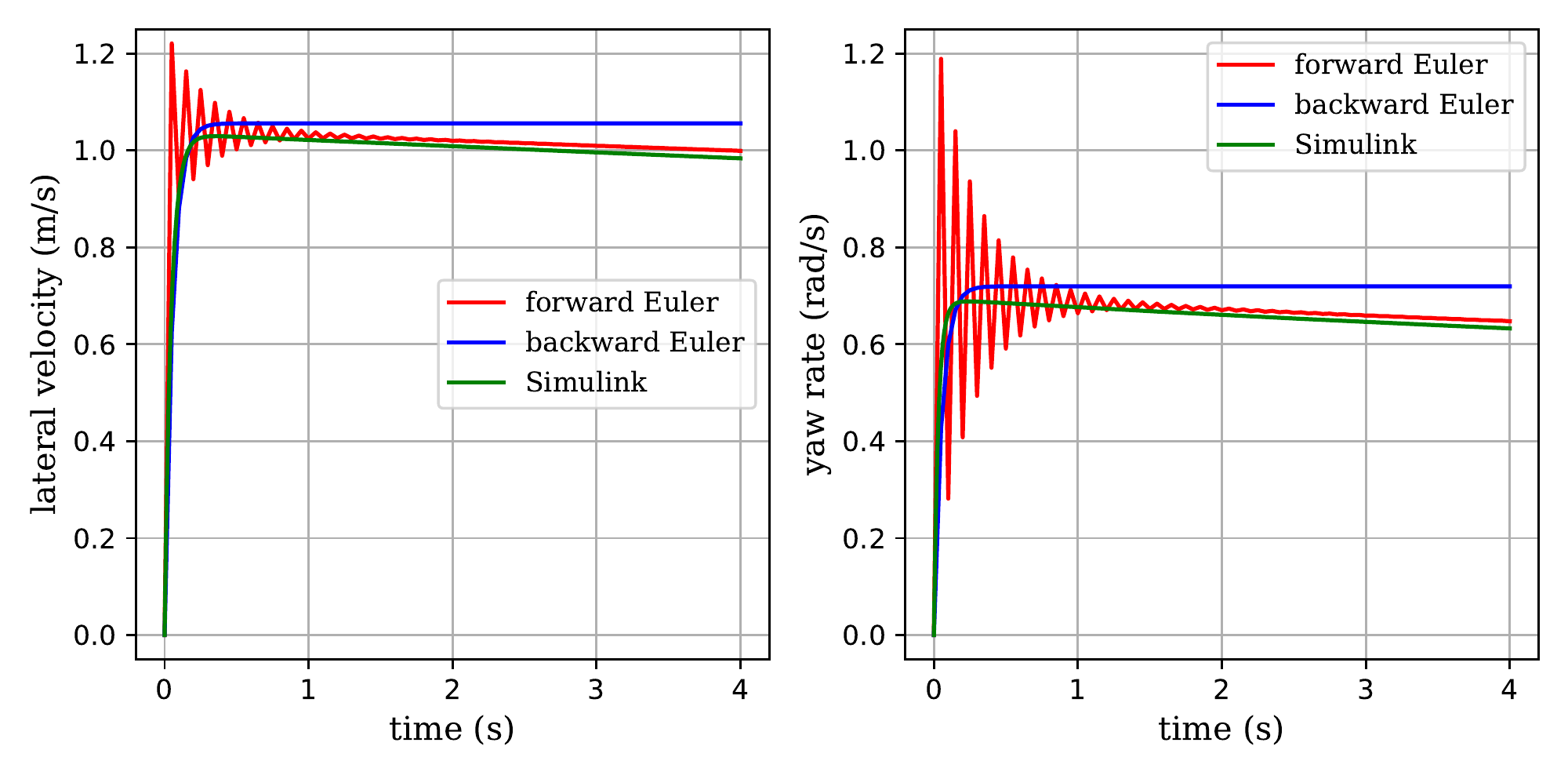}}\\
          \subfloat[\textit{c}]{\label{ts0.1}\includegraphics[width=0.4\textwidth]{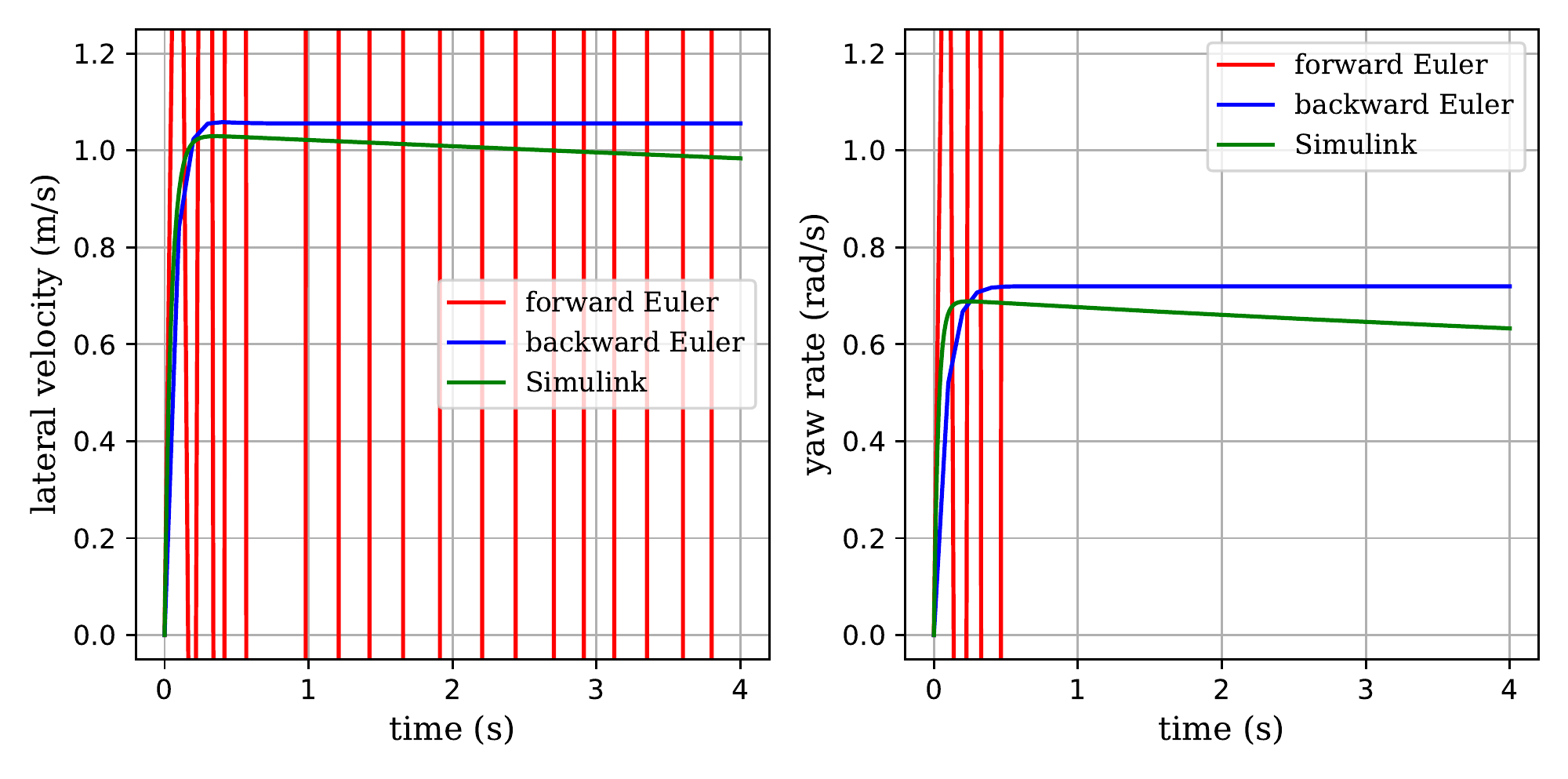}}\\
      \caption{{Lateral state with varying discretization step length under step steering input. Conditions: $u_0$ = 8 m/s, $\delta$ = 0.2674 rad\\
      (a)\ $T_s$ = 0.01 s,\ (b) $T_s$ = 0.05 s,\ (c) $T_s$ = 0.1 s}}
  \label{back_vs_for}
\end{figure}

The model (\ref{dyna_continuous}) with assumptions (\ref{c} and \ref{d}) is built in \textit{MATLAB/Simulink} with 4-order Runge-Kutta solver with a step length of 0.001s. The proposed model shows expected stability over time, and consistency over varying discretization step length $T_s$ (Fig. \ref{back_vs_for}). In contrast, the forward Euler method cannot hold this feature as $T_s$ increases.

\subsubsection{dynamic\ versus\ kinematic}

The open-loop step response of dynamic model (discretized by backward Euler method) and kinematic model (discretized by forward Euler method) are calculated, at different velocities. Simultaneously, the same input is imposed on a prototype model in \textit{CarSim} to approximate the groundtruth.

\begin{table}[htbp]

    \caption{Location RMS Error of 4-second Open-Loop Trajectory under Step Steer Input \\
    Prototype: \textit{CarSim} C-Class, Hatchback 2017}
    \label{tab: dyna_vs_kine}
    \begin{center}
    \begin{tabular}{m{1cm}<{\centering} m{1.8cm}<{\centering} m{1.8cm}<{\centering} m{1.8cm}<{\centering}} 
    \hline
    $u_0$ (m/s)&\textbf{backward dynamic (m)}&\textbf{forward kinematic (m)}&\textbf{Accuracy Improvement}\\
    \hline

    1 & 0.31 & \textbf{0.28} & -$11\%$\\

    2 & 0.42 & \textbf{0.38} & -$11\%$\\

    3 & 0.37 & \textbf{0.36} & -$3\%$\\

    4 & \textbf{0.60} & 0.73 & \textcolor{red}{+$18\%$}\\

    5 & \textbf{0.72} & 1.12 & \textcolor{red}{+$36\%$}\\

    6 & \textbf{0.93} & 1.72 & \textcolor{red}{+$46\%$}\\

    7 & \textbf{1.29} & 2.55 & \textcolor{red}{+$49\%$}\\

    8 & \textbf{1.83} & 3.62 & \textcolor{red}{+$49\%$}\\

    9 & \textbf{2.62} & 4.98 & \textcolor{red}{+$47\%$}\\

    10 & \textbf{3.89} & 6.85 & \textcolor{red}{+$43\%$}\\

    \hline
    \end{tabular}
    \end{center}
\end{table}

The open-loop trajectories of both models and their error relative to the prototype in \textit{CarSim} are shown in Fig. \ref{open_loop}.

\begin{figure}[htbp]
    \centering
    \captionsetup[subfigure]{justification=centering}
        \subfloat[\textit{a}]{\label{traj}\includegraphics[width=0.23\textwidth]{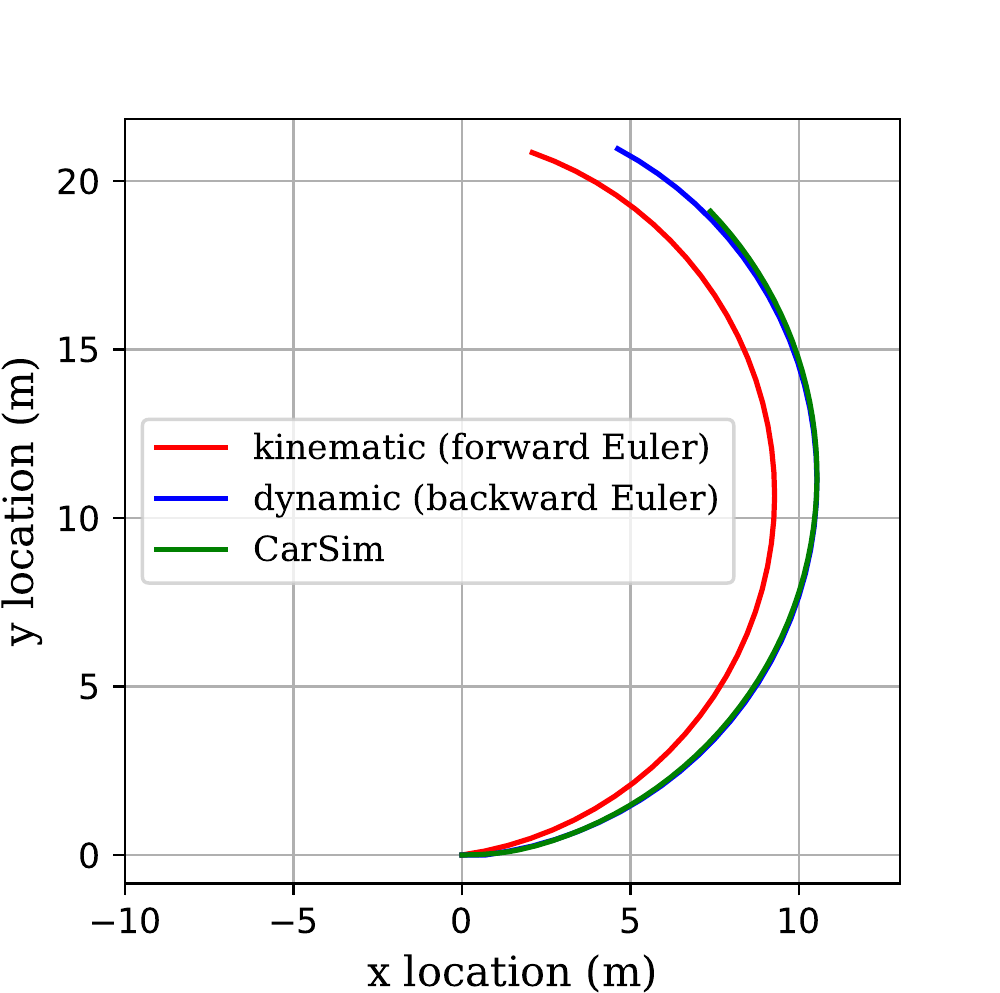}} \quad
        \subfloat[\textit{b}]{\label{error}\includegraphics[width=0.23\textwidth]{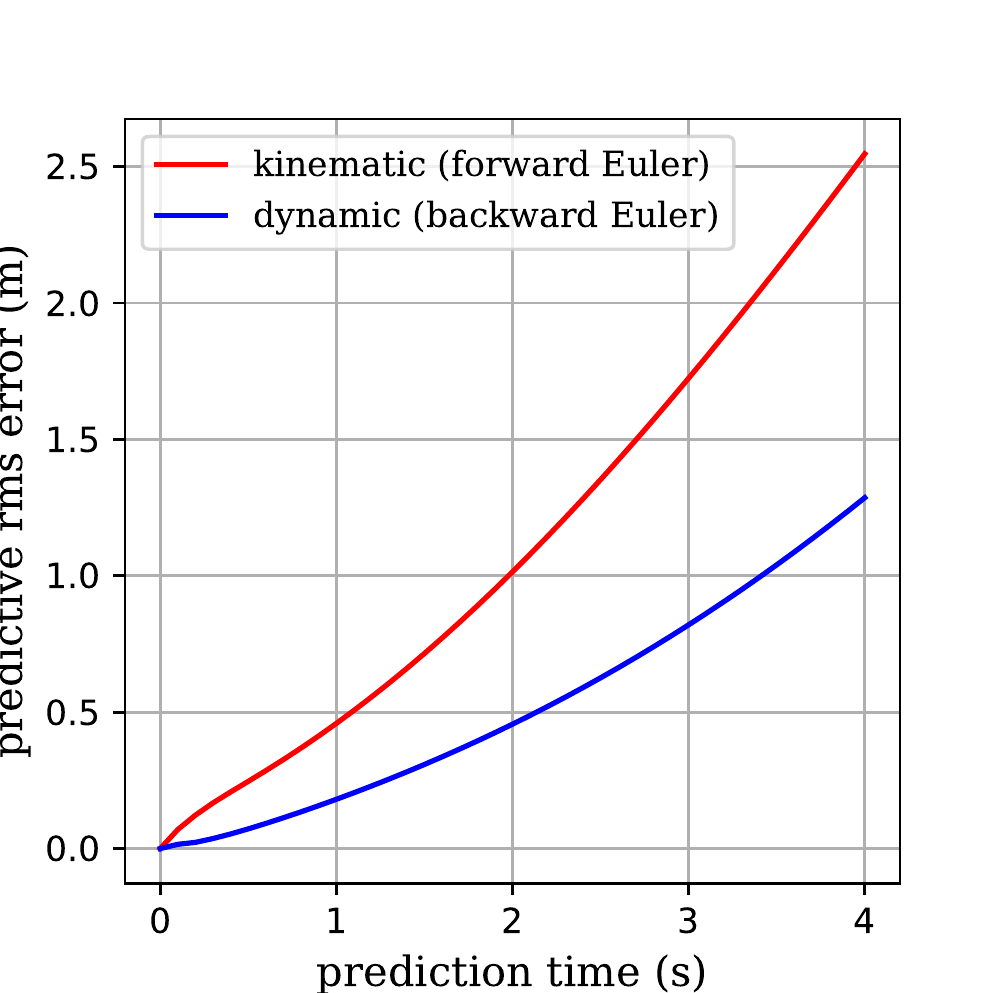}}
    \caption{Open-loop trajectory and error under step steer input. Conditions: $u_0$ = 8 m/s, $\delta$ = 0.2674 rad\\
    (a) trajectory,\ (b) location error}
\label{open_loop}
\end{figure}

Although kinematic model outperforms the backward-discretized dynamic model at lower speeds, this situation is inversed since $u_0=4$ m/s under defined parameters. To be emphasized, the absolute error at higher speed is much bigger than that of lower speeds. More importantly, the prediction accuracy is crucial at higher speeds where occupants’ lives are at stake.

\subsection{Closed-loop\ Control}

In this section, we design a nonlinear model predictive controller with (\ref{back_euler_bicycle}) to comprehensively verify its effect in practical application. Starting from (0,0), the vehicle model is always tracking a direct path connecting its center of gravity (C.G.) and the target (30,30). The reference speed is set to be 6 m/s. There is an obstacle (light blue round area in Fig. \ref{closed_loop_traj}) located at (15,15) blocking its way. The parameters are deliberately co-designed so that the vehicle can not bypass with steering, but only stop to avoid collision. Note that collision is defined by overlapping of the obstacle and C.G. of the vehicle, instead of vehicle body profile. Once the vehicle stops, the obstacle is moved to (18,12) (dark blue round area in Fig. \ref{closed_loop_traj}), giving way to the vehicle. Then the vehicle starts to execute a stop-and-go task. To be emphasized, the moved obstacle still requires it to steer and accelerate simultaneously, which is similar to a left-turning maneuver after red light at intersections.

\begin{figure}[htbp]
    \centerline{\includegraphics[width=6cm,height=5.52cm]{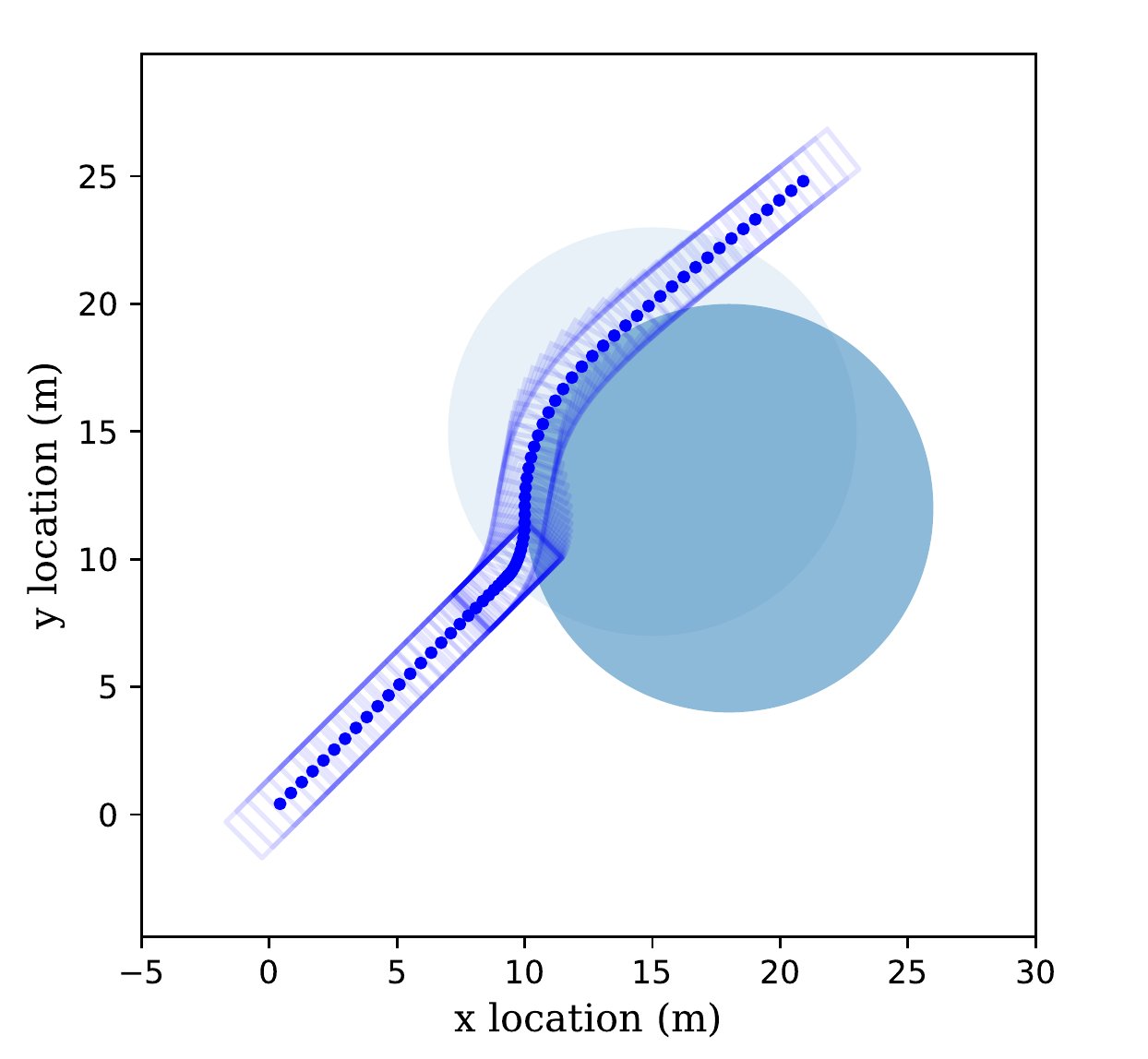}}
    \caption{Trajectory of the vehicle, avoiding a moving obstacle}
    \label{closed_loop_traj}
\end{figure}

The longitudinal velocity is shown in Fig. \ref{vel}, whereas the longitudinal input $a$ and lateral input $\delta$ are recorded in Fig. \ref{closed_loop}(a) and (b) respectively.

\begin{figure}[htbp]
    \centerline{\includegraphics[width=8cm,height=2.82cm]{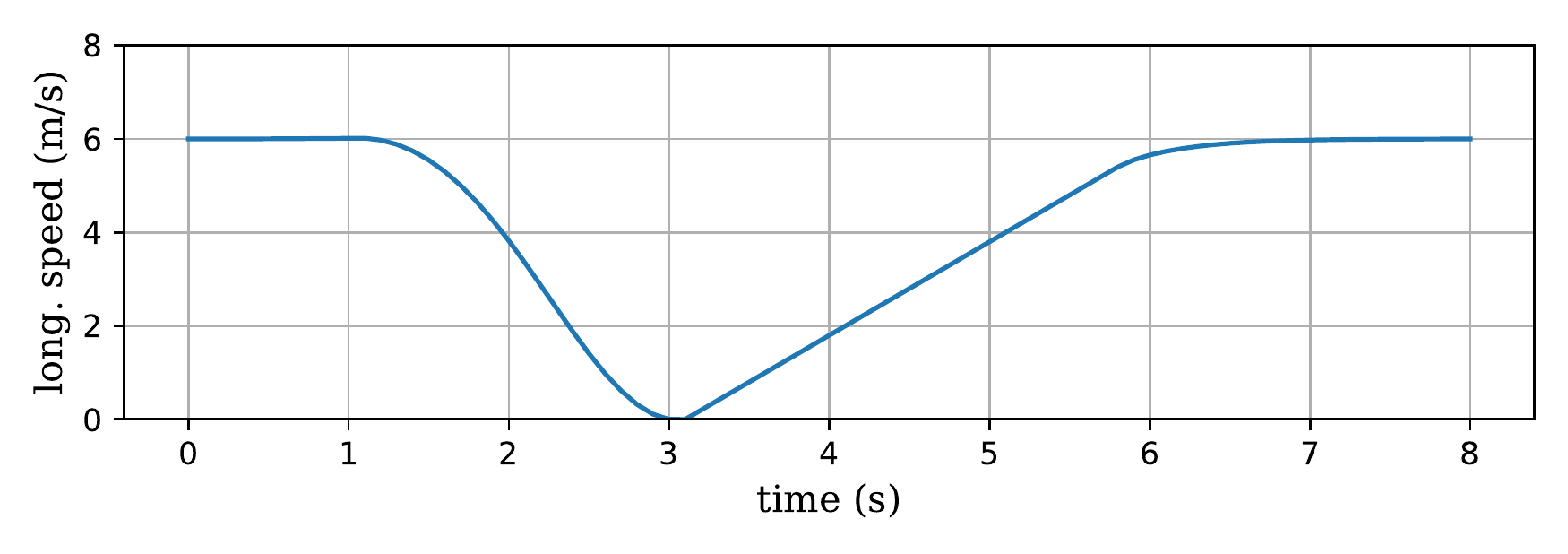}}
    \caption{Longitudinal velocity of the vehicle}
    \label{vel}
\end{figure}

\begin{figure}[htbp]
    \centering
    \captionsetup[subfigure]{justification=centering}
        \subfloat[\textit{a}]{\label{acc}\includegraphics[width=0.45\textwidth]{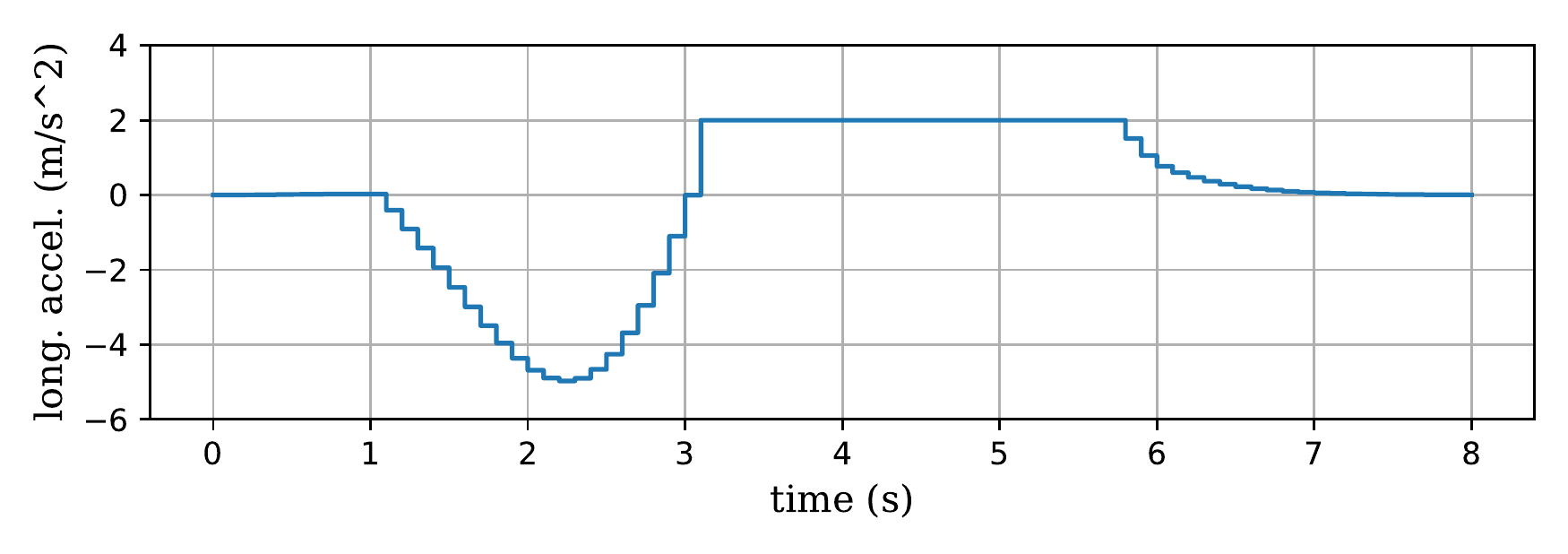}}\\
        \subfloat[\textit{b}]{\label{steer}\includegraphics[width=0.45\textwidth]{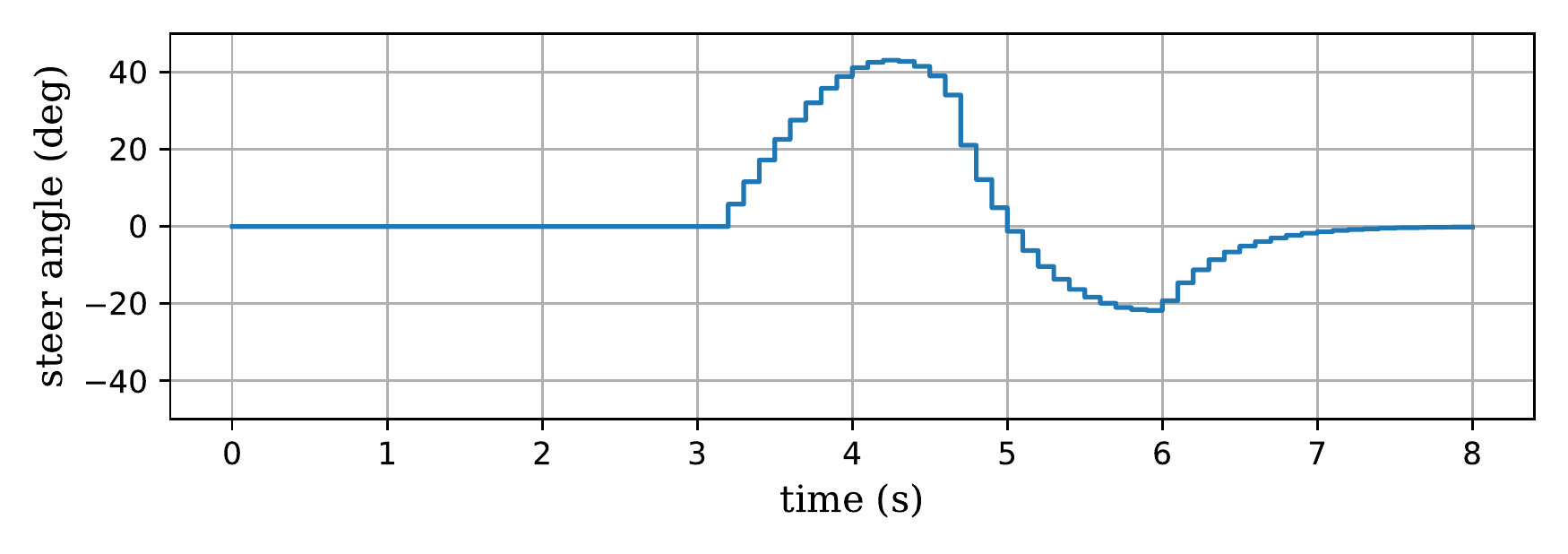}}\\
    \caption{MPC control input\\
    (a) longitudianl input,\ (b) lateral input}
\label{closed_loop}
\end{figure}

The constrained nonlinear optimal control problem is formulated as (\ref{obj}), satisfying (\ref{csrt}).
\begin{equation}
    \min \sum_{k=0}^{N}\left(X(k)-X_{R}(k)\right)^{T} Q\left(X(k)-X_{R}(k)\right)+U(k)^{T} R U(k)
    \label{obj}
\end{equation}
\begin{subequations}
    \begin{align}
        X(k+1)&=F(X(k), U(k))\\
        \left(X(k)-X_{o b s}(k)\right)^{T} &Q_{s}\left(X(k)-X_{o b s}(k)\right) \geq D_{s}^{2} \\
        X_{\min } \leq &X(k) \leq X_{\max } \\
        U_{\min } \leq &U(k) \leq U_{\max }
    \end{align}
    \label{csrt}
\end{subequations}
$X_R(k)$ is the $k$th reference trajectory point in the predictive horizon, $X_{obs}(k)$ conveys location of the obstacle at $k$th step in the predictive horizon. $Q$ = diag(100,100,0,0,0,0), $R$ = diag(10,500), $Q_s$ = diag(1,1,0,0,0,0), $D_s$ = 8. $X_{min}$ = [-$\inf$,-$\inf$,-$\inf$,0,-4,-3], $X_{max}$ = [+$\inf$,+$\inf$,+$\inf$,20,4,3]. $U_{min}$ = [-5,-$\pi$/4], $U_{max}$ = [2,$\pi$/4]. All quantities are in the international system of units.

Our model is computationally efficient as well. The average computation time of one-step MPC solution is 61.6 ms, whereas the same simulation carried out with kinematic model (\ref{kine_continuous}) has an average of 59.6 ms. The problem is solved by the nonlinear OCP solver \textit{ipopt} \cite{wachter2006implementation} under the framework \textit{CasADi} \cite{andersson2019casadi}, on a laptop with an \textit{intel i7} CPU and 12 GB ram.

\section{Conclusion}\label{sec_conclusion}


In this paper, we propose a discretized dynamic bicycle model inspired by backward Euler method and a sufficient condition that guarantees its numerical stability. The condition turns out easy to achieve under general vehicle parameters and common urban driving speed (e.g. $\leq 15 m/s$). Our method outperforms forward Euler method in stability and shows an advantage in accuracy up to 49\% compared to the kinematic model. A comprehensive stop-and-go task further verifies its effectiveness. It could be applied to vast fields, like MPC, ADP, etc., under general urban scenarios.

\addtolength{\textheight}{-12cm}   





\section*{ACKNOWLEDGMENT}
The authors are grateful to Hao Chen and Xuewu Lin for their valuable advice in matrix product derivation.



\end{document}